\renewcommand{\le}{\leqslant}
\renewcommand{\ge}{\geqslant}
\newcommand{\F}{\mathbb{F}}
\newcommand{\I}{\mathcal{I}}
\newtheorem{theorem}{Theorem}
\newtheorem{corollary}[theorem]{Corollary}
\newtheorem{prop}[theorem]{Proposition}
\newtheorem{lemma}[theorem]{Lemma}
\newtheorem{defn}{Definition}
\begin{document}

\title{An Exponential Lower Bound on the Sub-Packetization of Minimum Storage Regenerating Codes\thanks{An earlier version this work was presented at the 2019 ACM Symposium on Theory of Computing (STOC)~\cite{AG-stoc19}. The current version includes a slightly improved lower bound.}}
\author{Omar Alrabiah\thanks{Department of Mathematical Sciences, Carnegie Mellon University, Pittsburgh, PA. Email: {\tt oalrabia@andrew.cmu.edu}} \and Venkatesan Guruswami\thanks{Computer Science Department, Carnegie Mellon University, Pittsburgh, PA 15213. Email: {\tt venkatg@cs.cmu.edu}. Research supported in part by NSF CCF-1563742.}}
	
\date{}
\maketitle
\thispagestyle{empty}

\begin{abstract}
An $(n,k,\ell)$-vector MDS code over a field $\F$ is a $\F$-linear subspace of $(\F^\ell)^n$  of dimension $k\ell$, such that any $k$ (vector) symbols of the codeword suffice to determine the remaining $r=n-k$ (vector) symbols. The length $\ell$ of each codeword symbol is called the \emph{sub-packetization} of the code. Such a code is called minimum storage regenerating (MSR), if any single symbol of a codeword can be recovered by downloading $\ell/r$ field elements (which is known to be the minimum possible) from each of the other symbols. 
	
\smallskip MSR codes are attractive for use in  distributed storage systems, and by now a variety of ingenious constructions of MSR codes are available. However, they all suffer from exponentially large sub-packetization $\ell \gtrsim r^{k/r}$. Our main result is an almost tight lower bound showing that for an MSR code, one must have $\ell \ge \exp(\Omega(k/r))$. Previously, a lower bound of $\approx \exp(\sqrt{k/r})$, and a tight lower bound for a restricted class of "optimal access" MSR codes, were known. 
\end{abstract}

\section{Introduction}

Traditional Maximum Distance Separable (MDS) codes such as Reed-Solomon codes provide the optimal trade-off between redundancy and number of worst-case erasures tolerated. When encoding $k$ symbols of data into an $n$ symbol codeword by an $(n,k)$-MDS code, the data can be recovered from any set of $k$ out of $n$ codeword symbols, which is clearly the best possible. MDS codes are thus a  a naturally appealing choice to minimize storage overhead in distributed storage systems (DSS). One can encode data, broken into $k$ pieces, by an $(n,k)$-MDS code, and distribute the $n$ codeword symbols on $n$ different storage nodes, each holding the symbol corresponding to one codeword position. In the sequel, we use the terms storage node and codeword symbol interchangeably. 

A rather common scenario faced by  modern large scale DSS is the failure or temporary unavailability of storage nodes. It is of great importance to promptly respond to such failures, by efficient repair/regeneration of the failed node using the content stored in some of other nodes (which are called ``helper" nodes as they assist in the repair). This requirement has spurred a set of fundamentally new and exciting challenges concerning codes for recovery from erasures, with the goal of balancing worst-case fault tolerance from many erasures, with very efficient schemes to recover from the much more common scenario of single (or a few) erasures.

There are two measures of repair efficiency that have received a significant amount of attention in the last decade. One concerns \emph{locality}, where we would like to repair a node \emph{locally} based on the contents of a small number of other storage nodes. Such locality necessarily compromises the MDS property, and a rich body of work on locally repairable codes (LRCs) studies the best trade-offs possible in this model and constructions achieving those~\cite{GHSY12,PD14,TB14}. The other line of work, which is the subject of this paper, focuses on optimizing the amount of data downloaded from the other nodes. This model allows the helper node to respond with a fraction of its contents. The efficiency measure is the \emph{repair bandwidth}, which is the total amount of data downloaded from all the helper nodes. Codes in this model are called regenerating codes, and were systematically introduced in the seminal work of Dimakis et al.~\cite{DGWWR10}, and have since witnessed an explosive amount of research.

Rather surprisingly, even for some MDS codes, by contacting more helper nodes but downloading fewer symbols from each, one can do much better than the ``usual" scheme, which would download the contents of $k$ nodes in full. In general an entire spectrum of trade-offs is possible between storage overhead and repair bandwidth. This includes \emph{minimum bandwidth regenerating (MBR) codes} with the minimum repair bandwidth of $\ell$~\cite{RSK11}.   At the other end of the spectrum, we have \emph{minimum storage regenerating (MSR) codes} defined formally below) which retain the MDS property and thus have optimal redundancy. This work focuses on MSR codes.

\smallskip\noindent \textbf{Example.}
We quickly recap the classic example of the EVENODD code~\cite{BBBM94,DRWS-survey} to illustate regeneration of a lost symbol in an MDS code with non-trivial bandwidth. This is an $(4,2)$ MDS code with $4$ storage nodes, each storing a vector of two symbols over the binary field. We denote by $\mathcal{P}_1,\mathcal{P}_2$ the two parity nodes.
\begin{center}
\begin{tabular}{| c | c | c | c |} 
\hline
$\mathcal{S}_1$ & $\mathcal{S}_2$ & $\mathcal{P}_1$ & $\mathcal{P}_2$ \\
\hline
\; $a_1$ \; & \; $b_1$ \; & \; $a_1 + b_1$ \; & \; $a_2 + b_1$ \; \\
\hline
\; $a_2$ \; & \; $b_2$ \; & \; $a_2 + b_2$ \; & \; $a_1 + a_2 + b_2$ \; \\
\hline
\end{tabular}
\end{center}
The naive scheme to repair a node would contact any two of the remaining three nodes, and download both bits from each of them, for a total repair bandwidth of $4$ bits. However, it turns out that one can get away with downloading just one bit from each of the three other nodes, for a repair bandwidth of $3$ bits! If we were to repair the node $\mathcal{S}_1$, the remaining nodes $(\mathcal{S}_2, \mathcal{P}_1, \mathcal{P}_2)$ would send $(b_1, a_1+b_1, a_2+b_1)$, respectively. If we were to repair the node $\mathcal{S}_2$, the remaining nodes $(\mathcal{S}_1, \mathcal{P}_1, \mathcal{P}_2)$ would send $(a_2, a_2+b_2, a_2+b_1)$, respectively. If we were to repair the node $\mathcal{P}_1$, the remaining nodes $(\mathcal{S}_1, \mathcal{S}_2, \mathcal{P}_2)$ would send $(a_1, b_1, a_1+a_2+b_2)$, respectively. If we were to repair the node $\mathcal{P}_2$, the remaining nodes $(\mathcal{S}_1, \mathcal{S}_2, \mathcal{P}_1)$ would send $(a_2, b_1, (a_1+b_1) + (a_2+b_2))$, respectively.  Note that in the last case, the helper node $\mathcal{P}_1$ sends a linear combination of its symbols---this is in general a powerful ability that we allow in MSR codes.

\smallskip\noindent \textbf{Vector codes and sub-packetization.}
The above example shows that when the code is an $(n,k)$ \emph{vector} MDS code, where each codeword symbol itself is a \emph{vector}, say in $\F^\ell$ for some field $\F$, then one can hope to achieve repair bandwidth smaller than then naive $k\ell$. The length of the vector $\ell$ stored at each node is called the \emph{sub-packetization} (since this is the granularity to which a single codeword symbol needs to be divided into). 

\smallskip
\noindent \textbf{MSR codes.}
A natural question is how small a repair bandwidth one can achieve with MDS codes. The so-called \emph{cutset bound}~\cite{DGWWR10} dictates that one must download at least $(n-1)\ell/(n-k)$ symbols of $\F$ from the remaining nodes to recover any single node. Further, in order to attain this optimal repair bandwidth bound, each of the $(n-1)$ nodes must respond with $\ell/(n-k)$ field elements. Vector MDS codes which admit repair schemes meeting the cutset bound (for repair of every node) are called minimum storage regenerating (MSR) codes (for the formal description, see Definition~\ref{def:msr-code}). MSR codes, and specifically their sub-packetization, are the focus of this paper. 

\smallskip
\noindent \textbf{Large sub-packetization: problematic and inherent.}
While there are many constructions of MSR codes by now, they all have large sub-packetization, which is at least $r^{k/r}$. For the setting of most interest, when we incur a small redundancy $r$ in exchange for repair of information, this is very large, and in particular $\exp(\Omega(k))$ when $r=O(1)$. A small sub-packetization is important for a number of reasons, as explained in some detail in the introduction of \cite{RTGE18}.  A large sub-packetization limits the number of storage nodes (for example if $\ell \ge \exp(\Omega(n))$, then $n=O(\log \ell)$ where $\ell$ is the storage capacity of each node), and in general leads to a reduced design space in terms of various systems parameters. A larger sub-packetization also makes management of meta-data, such as description of the code and the repair mechanisms for different nodes, more difficult. For a given storage capacity, a smaller sub-packetization allows one to distribute codewords corresponding to independently coded files among multiple nodes, which allows for distributing the load of providing information for the repair of a failed node among a larger number of nodes.

It has been known that somewhat large sub-packetization is inherent for MSR codes (we will describe the relevant prior results in the next section). In this work, we improve this lower bound to exponential, showing that unfortunately the exponential sub-packetization of known constructions is inherent. Our main result is the following.

\begin{theorem}
\label{thm:main-intro}
Suppose an $(n,k)$-vector MDS code with redundancy $r=n-k \ge 2$ is minimum storage regenerating (MSR). Then its sub-packetization $\ell$ must satisfy\footnote{In the conference version~\cite{AG-stoc19}, a weaker lower bound of  $e^{(k-1)/(4r)}$ was shown.}
\[ \ell \ge \left(\frac{r^2}{r^2-r+1}\right)^{(k-1)/2} \ge e^{(k-1)(r-1)/(2r^2)} \ . \]
\end{theorem}

Our lower bound almost matches the sub-packetization of $r^{O(k/r)}$ achieved by the best known constructions. Improving the base of the exponent in our lower bound to $r$ will make it even closer to the upper bounds. Though when $r$ is small, which is the primary setting of interest in codes for distributed storage, this difference is not that substantial. We remark that our theorem leaves out the case when $r=1$, which is known to have a sub-packetization of $\ell=1$~\cite{GoparajuFV16}.

\smallskip\noindent \textbf{A few words about our proof.} Previous work~\cite{TWB14} has shown that an $(n,k)$ MSR code with sub-packetization $\ell$ implies a family of $(k-1)$ $\ell/r$-dimensional subspaces $H_i$ of $\F^\ell$ each of which has an associated collection of $(r-1)$ linear maps obeying some strong properties. For instance, in the case $r=2$, there is an invertible map $\phi_i$ associated with $H_i$ for each $i$ which leaves all subspaces $H_j$, $j \neq i$, invariant, and maps $H_i$ itself to a disjoint space (i.e., $\phi_i(H_i) \cap H_i = \{0\}$). 
The task of showing a lower bound on $\ell$ then reduces to the linear-algebraic challenge of showing an upper bound on the size of such a family of subspaces and linear transformations, which we call an MSR subspace family (Definition~\ref{def:msr-family}). 
The authors of \cite{GTC14} showed an upper bound $O(r \log^2 \ell)$ on the size of MSR subspace families via a nifty partitioning and linear independence argument. 

We follow a different approach by showing that the number of linear maps that fix all subspaces in an MSR family decreases sharply as the number of subspaces increases. Specifically, we show that dimension of the linear space of such linear maps decreases exponentially in the number of subspaces in the MSR family. This enables us to prove an $O(r \log \ell)$ upper bound. This bound is asymptotically tight (up to a $O(\log r)$ factor), as there is a construction of an MSR subspace family of size $(r+1)\log_r \ell$~\cite{WTB12}. We also present an alternate construction in Section~\ref{sec:construction}, which works for all fields with more than $2$ elements, compared to the large field size (of at least $\approx r^r \ell$) required in \cite{WTB12}.

\medskip
We now proceed to situate our work in the context of prior work, both constructions and lower bounds, for MSR codes.

\section{Prior and Related Work}

The literature on regenerating codes, and even just MSR codes, is vast with numerous models and constraints, and many incomparable results. Here we only mention the ones closely related to our work and its context --- MSR codes for exact repair with $n-1$ helper nodes, focusing primarily on their sub-packetization.

\smallskip
\noindent \textbf{MSR code constructions.}
We begin code constructions/existence results. Rashmi et al. present an explicit construction of MSR codes with small sub-packetization $\ell \le r$ when the code rate $k/n$ is at most $1/2$~\cite{RSK11}. Cadambe et al.~\cite{CJMRS13} show the existence of high rate MSR codes when the sub-packetization approaches infinity.  Motivated by this result, the problem of designing high-rate MSR codes with finite sub-packetization level is explored in \cite{PapDimCad13, zigzag13, SAK15, WTB12, Cadambe_poly, GoparajuFV16, YeB17a, YeB17b, SVK16} and references therein. In particular, Sasidharan et al.~\cite{SAK15} show the existence of MSR codes with the sub-packetization level  $\ell = r^{\lceil \frac{n}{r} \rceil}$. Such a result with similar sub-packetization levels for repair of only $k$ systematic nodes was obtained earlier in \cite{WTB12, Cadambe_poly}. In order to ensure the MDS property, these results relied on huge fields and randomized construction of the parity check matrices. 

In two fascinating (independent) works, Ye and Barg~\cite{YeB17b} and Sasidharan, Vajha, and Kumar~\cite{SVK16} give a fully explicit construction of MSR codes over small fields with sub-packetization level $\ell = r^{\lceil \frac{n}{r}\rceil}$. These constructions also have the so-called \emph{optimal-access} or \emph{help-by-transfer} property, which means that the helper nodes do not have to perform any linear combinations on their data, and can simply transfer a suitable subset of $\ell/r$ coordinates of the vector in $\F^\ell$ that they store. Thus the number of symbols \emph{accessed} at a node equals the number of symbols it transmits over the network to aid the repair (recall that the repair-bandwidth measures the latter amount).

\smallskip
\noindent \textbf{Sub-packetization lower bounds.}
In summary, while there are several constructions of high rate MSR codes, they all incur large sub-packetization, which is undesirable as briefly explained earlier. This has been partially explained by lower bounds on $\ell$ in a few previous works. For the special case of optimal-access MSR codes, a lower bound of $\ell \ge r^{k/r}$ was shown in \cite{TWB14}, and this was improved (when all-node repair is desired) to $\ell \ge r^{n/r}$ recently~\cite{BK18}. Together with the above-mentioned constructions, we thus have matching upper and lower bounds on $\ell$ for the optimal-access case. This help-by-transfer setting is primarily combinatorial in nature, which is exploited heavily in these lower bounds.

However, lower bounds for general MSR codes, that allow helper nodes to transmit linear combinations of their comments, are harder to obtain. Such a lower bound must rule out a much broader range of possible repair schemes, and must work in an inherently linear-algebraic rather than combinatorial setting. Note that the simple example presented above also used linear combinations in repairing one of the nodes. An MSR code construction with sub-packetization $\ell \le r^{k/(r+1)}$, which beats the above lower bound for optimal-access codes and thus shows a separation between these models, was given in \cite{WTB12}.

Turning to known lower bounds on $\ell$, a weak bound of $\ell {{\ell}\choose{\ell/r}} \ge k$ was shown via a combinatorial argument in \cite{TWB14}. Using an elegant linear independence and partitioning argument, the following bound is proven in \cite{GTC14}:
\begin{equation}
\label{eq:gtc}
2 \log_2(\ell) (\log_{r/(r-1)}(\ell) + 1)  \ge k-1 \ .
\end{equation}
(This was slightly improved in \cite{HPX18}, but the improvement is tiny for the case when $k > r$ which is our focus.) The bound \eqref{eq:gtc} implies a lower bound on sub-packetization of $\ell \ge 2^{\Omega(\sqrt{k/r})}$. Even for the case $r=2$, it was not known if one can achieve sub-packetization smaller than $2^{\Omega(k)}$. Our Theorem~\ref{thm:main-intro} now rules out this possibility. We conjecture that our bound can be improved to $k \le (r+1) \log_r \ell + O(1)$ which will show that the construction in \cite{WTB12} is exactly tight.

\smallskip
\noindent \textbf{Variants of MSR codes.}
While most constructions of MSR and regenerating codes were tailormade vector codes, it was shown in \cite{GW16} that the classical family of Reed-Solomon (RS) themselves can allow for non-trivially bandwidth-efficient repair. This later led to the carefully constructed RS codes which can be repaired with the optimal bandwidth meeting the cutset bound~\cite{TYB-focs17} --- in other words, certain RS codes themselves are MSR codes! However, these RS codes have even larger sub-packetization of $2^{O(n\log n)}$ and this was also shown to be necessary in the form of a lower bound of $2^{\Omega(k\log k)}$ in \cite{TYB-focs17}. 

A slight relaxation MSR codes called $\epsilon$-MSR codes where the helper nodes are allowed to transmit a factor $(1+\epsilon)$ more than the cutset bound, i.e., $(1+\epsilon)\ell/r$ symbols, were put forth in \cite{RTGE18}. They showed that one can construct $\epsilon$-MSR codes with sub-packetization $r^{O(r/\epsilon)} \log n$, and roughly logarithmic sub-packetization is also necessary.

Regenerating and MSR codes have close connections to communication-efficient secret sharing schemes, which were studied and developed in \cite{HLKB15}. In this context, the sub-packetization corresponds to the size of the shares that the parties must hold.

\section{Preliminaries}
\label{sec:prelims}
We will now define MSR codes more formally. We begin by defining vector codes. Let $\F$ be a field, and $n,\ell$ be positive integers. For a positive integer $b$, we denote $[b]=\{1,2,\dots,b\}$. A vector code $C$ of block length $n$ and \emph{sub-packetization} $\ell$ is an $\F$-linear subspace of $(\F^{\ell})^n$. We can express a codeword of $C$ as $\mathbf{c} = (\mathbf{c_1}, \mathbf{c_2}, \dots,\mathbf{c_n})$, where for $i \in [n]$, the block $\mathbf{c_i} = (c_{i,1},\dots,c_{i,\ell})\in \F^\ell$ denotes the length $\ell$ vector corresponding to the $i$'th \emph{code symbol} $\mathbf{c_i}$.

Let $k$ be an integer, with $1 \le k \le n$. If the dimension of $C$, as an $\F$-vector space, is $k \ell$, we say that $C$ is an $(n,k,\ell)_\F$-vector code. The codewords of an $(n,k,\ell)_\F$-vector code are in one-to-one correspondence with vectors in $(\F^\ell)^k$, consisting of $k$ blocks of $\ell$ field elements each. 

Such a code is said to be Maximum Distance Separable (MDS), and called an $(n,k,\ell)$-MDS code (over the field $\F$), if every subset of $k$ code symbols $\mathbf{c_{i_1}}, \mathbf{c_{i_2}}, \dots, \mathbf{c_{i_k}}$ is an information set for the code, i.e., knowing these symbols determines the remaining $n-k$ code symbols and thus the full codeword. An MDS code thus offers the optimal erasure correction property --- the information can be recovered from any set of $k$ code symbols, thus tolerating the maximum possible number $n-k$ of worst-case erasures. 

An $(n,k,\ell)$-MDS code can be used in distributed storage systems as follows. Data viewed as $k\ell$ symbols over $\F$ is encoded using the code resulting in $n$ vectors in $\F^\ell$, which are stored in $n$ storage nodes. Downloading the full contents from any subset of these $k$ nodes (a total of $k\ell$ symbols from $\F$) suffices to reconstruct the original data in entirety. Motivated by the challenge of efficient regeneration of a failed storage node, which is a fairly typical occurrence in large scale distributed storage systems, the repair problem aims to recover any single code symbol $\mathbf{c_i}$ by downloading fewer than $k \ell$ field elements. This is impossible if one only downloads contents from $k$ nodes, but becomes feasible if one is allowed  to contact $h > k$ \emph{helper} nodes and receive fewer than $\ell$ field elements from each.

Here we focus our attention to only repairing the first $k$ code symbols, which we view as the information symbols. This is called "systematic node repair" as opposed to the more general "all node repair" where the goal is to repair all $n$ codeword symbols. We will also only consider the case $h=n-1$, when all the remaining nodes are available as helper nodes. Since our focus is on a lower bound on the sub-packetization $\ell$, this only makes our result stronger, and keeps the description somewhat simpler. We note that the currently best known constructions allow for all-node repair with optimal bandwidth from any subset of $h$ helper nodes.

Suppose we want to repair the $m$'th code symbol for some $m \in [k]$. We download from the $i$'th code symbol, $i \neq m$, a function $h_{i,m}(\mathbf{c_i})$ of its contents, where $h_{i,m} : \F^\ell \to \F^{\beta_{i,m}}$ is the repair function. If we consider the linear nature of $C$, then we should expect from $h_{i,m}$ to utilize it. Therefore, throughout this paper, we shall assume linear repair of the failed node. That is, $h_{i,m}$ is an $\F$-linear function. Thus, we download from each node certain linear combinations of the $\ell$ symbols stored at that node. The total \emph{repair bandwidth} to recover $\mathbf{c_m}$ is defined to be $\sum_{i \neq m} \beta_{i,m}$. By the cutset bound for repair of MDS codes~\cite{DGWWR10}, this quantity is lower bounded by $(n-1) \ell/r$, where $r=n-k$ is the redundancy of the code. Further, equality can be attained only if $\beta_{i,m} = \ell/r$ for all $i$. That is, we download $\ell/r$ field elements from each of the remaining nodes. MDS codes achieving such an optimal repair bandwidth are called \emph{Minimum Storage Regenerating (MSR) codes}, as precisely defined below.

\begin{defn}[MSR code]
\label{def:msr-code}
Let $1 \le k \le n$ and $\ell$ be integers with $r = n-k$ dividing $\ell$. An $(n,k,\ell)$-MDS code $C$ over a field $\F$ is said to be an \emph{$(n,k,\ell)$-MSR code} if for every $m \in [k]$, there are linear functions $h_{i,m} : \F^\ell \to \F^{\ell/r}$, $i \in [n] \setminus \{m\}$, such that the code symbol $\mathbf{c_m}$ of a codeword $\mathbf{c} \in C$ can be computed by an $\F$-linear operation on $\langle h_{i,m}(\mathbf{c_i}) \mid i \in [n] \setminus \{m\}\rangle \in \F^{(n-1)\ell/r}$. 
\end{defn}

\section{Linear-algebraic setup of repair of MSR code}
\label{sec:repair-setup}
In this section, we will setup the repair problem for MSR codes more precisely, leading to a purely linear-algebraic formulation in terms of a collection of subspaces and associated invertible maps. We follow the explanation presented in \cite{TWB14}. 

Let $\mathcal{C}\subseteq (\F^\ell)^n$ be an $(n,k,\ell)$-MSR code, with redundancy $r=n-k$. The MDS property implies that any subset of $k$ codeword symbols determine the whole codeword. We view the first $k$ symbols as the "systematic" ones, with $r$ parity check symbols computed from them, where we remind that when we say code symbol we mean a vector in $\F^\ell$.  So we can assume that there are invertible matrices 
$C_{i,j} \in \F^{\ell \times \ell}$ for $i\in [r]$ and $j \in [k]$ such that for $\mathbf{c} = (\mathbf{c_1}, \mathbf{c_2}, \ldots , \mathbf{c_n}) \in \mathcal{C}$, we have
	\begin{equation*}
	\mathbf{c_{k+i}} = \sum_{j=1}^k{C_{i, j}\mathbf{c_j}} \ .
	\end{equation*}
Suppose we want to repair a systematic node $\mathbf{c_m}$ for $m \in [k]$ with optimal repair bandwidth, by receiving from each of the remaining $n-1$ nodes, $\ell/r$ $\F$-linear combinations of the information they stored. This means that there are repair matrices $S_{1, m}, \ldots , S_{r, m} \in \F^{\ell/r \times \ell}$, such that parity node $k+i$ sends the linear combination 
\begin{equation}
\label{eq:lin-comb-repair}
S_{i, m}\mathbf{c_{k+i}} = S_{i, m}\sum_{j=1}^k{C_{i, j}\mathbf{c_j}}
\end{equation}

Therefore, the information about $\mathbf{c_m}$ that is sent to it by $\mathbf{c_{k+i}}$ is $S_{i,m}C_{i, m}\mathbf{c_m}$. Since the $k$ systematic nodes are independent of each other, then the only way to recover $\mathbf{c_m}$ is by taking a linear combination of $S_{i,m}C_{i, m}\mathbf{c_m}$ for $i \in [r]$ such that the linear combination equals $\mathbf{c_m}$ for any $\mathbf{c_m} \in \F^\ell$. Therefore, to ensure full regeneration of $\mathbf{c_m}$, we must satisfy
\begin{equation*}
\text{rank}\begin{bmatrix}
S_{1, m}C_{1, m} \\ 
S_{2, m}C_{2, m} \\ 
\vdots \\
S_{r, m}C_{r, m} \\ 
\end{bmatrix}
= \ell
\end{equation*}
Since each $S_{i,m} C_{i,m}$ has $\ell/r$ rows, the above happens if and only if 
\begin{equation}
\label{eq:self-disjoint}
\bigoplus_{i=1}^r \mathcal{R}(S_{i,m} C_{i,m}) = \F^\ell
\end{equation}
 where $\mathcal{R}(M)$ denotes the row-span of a matrix $M$.

\subsection{Cancelling interference of other systematic symbols}

Now, for every other systematic node $m' \in [k] \setminus \{m\}$, the parity nodes send the following information linear combinations of $\mathbf{c_{m'}}$
\begin{equation}
\label{eq:cancel}
\begin{bmatrix}
S_{1, m}C_{1, m'} \\ 
S_{2, m}C_{2, m'} \\ 
\vdots \\
S_{r, m}C_{r, m'} \\ 
\end{bmatrix} \mathbf{c_{m'}}
\end{equation}
In order to cancel this from the linear combinations \eqref{eq:lin-comb-repair} received from the parity nodes, the systematic node $m'$ has to send the linear combinations \eqref{eq:cancel} about its contents. To achieve optimal repair bandwidth of at most $\ell/r$ symbols from every node, this imposes the requirement
\begin{equation*}
\text{rank}\begin{bmatrix}
S_{1, m}C_{1, m'} \\ 
S_{2, m}C_{2, m'} \\ 
\vdots \\
S_{r, m}C_{r, m'} \\ 
\end{bmatrix} \le \frac{\ell}{r}
\end{equation*}
However since $C_{i,m'}$ is invertible, and $S_{i,m}$ has full row rank, $\text{rank}(S_{i, m}C_{i, m'}) = \ell/r$ for all $i \in [r]$. Combining this fact with the rank inequality above, this implies
\begin{equation}
\label{eq:fixed-by-others}
\mathcal{R}(S_{1, m}C_{1, m'}) = \cdots = \mathcal{R}(S_{r, m}C_{r, m'})
\end{equation}
for every $m \neq m' \in [k]$, where $\mathcal{R}(M)$ is the row-span of a matrix $M$. 

\subsection{Constant repair matrices and casting the problem in terms of subspaces}

We now make an important simplification, which allows us to assume that the matrices $S_{i,m}$ above depend only on the node $m$ being repaired, but not on the helping parity node $i$. That is, $S_m = S_{i,m}$ for all $i \in [r]$. We call repair with this restriction as possessing \emph{constant repair matrices}. It turns out that one can impose this restriction with essentially no loss in parameters --- by Theorem~2 of \cite{TWB14}, if there is a $(n,k,\ell)$-MSR code then there is also a $(n-1,k-1,\ell)$-MSR code with constant repair matrices. 

This allows us to cast the requirements \eqref{eq:self-disjoint} and \eqref{eq:fixed-by-others} in terms of a nice property about subspaces and associated invertible maps, which we abstract below. This property was shown to be intimately tied to MSR codes in \cite{WTB12,TWB14}.

\begin{defn}[MSR subspace family]
\label{def:msr-family}
For integers $\ell,r$ with $r | \ell$ and a field $\F$, a collection of subspaces $H_1,\dots,H_k$ of $\F^\ell$ of dimension $\ell/r$ each is said to be an \emph{$(\ell,r)_\F$-MSR subspace family} if there exist invertible linear maps $\Phi_{i,j}$ on $\F^\ell$, $i \in \{1,2,\dots,k\}$ and $j \in \{1,2,\dots,r-1\}$ such that for every $i \in [k]$, the following holds:
\begin{align}
H_i &\oplus \bigoplus_{j=1}^{r-1}{\Phi_{i, j}(H_i)} = \F^\ell \label{eq:msr-1} \\ 
\Phi_{i', j}(H_i) = H_i &\text{ for every $j \in [r-1]$, and $i' \neq i$} \label{eq:msr-2}
\end{align}
\end{defn}

Now, we recall the argument that if we have an $(n,k,\ell)$-MSR code with constant repair matrices, then that also yields a family of subspaces and maps with the above properties. Indeed, we can take $H_m$, $m \in [k]$, to be $\mathcal{R}(S_m)$, and $\Phi_{m,j}$, $j \in [r-1]$, is the invertible linear transformation mapping $\mathbf{x} \in \F^\ell$, viewed as a row vector, to $\mathbf{x} C_{j+1,m} C_{1,m}^{-1}$. It is clear that Property \eqref{eq:msr-1} follows from \eqref{eq:self-disjoint}, and Property \eqref{eq:msr-2} follows from \eqref{eq:fixed-by-others}. Together with the loss of one dimension in the transformation~\cite{TWB14} to an MSR code with constant repair subspaces, we can conclude the following connection between MSR codes and the very structured set of subspaces and maps of Definition~\ref{def:msr-family}.
\begin{prop}
	\label{prop:connection}
	Suppose there exists an $(n,k,\ell)$-MSR code over a field $\F$. Then there exists an $(\ell,r)_\F$-MSR subspace family with $k-1$ subspaces.
\end{prop}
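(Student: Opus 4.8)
The plan is simply to assemble the reductions already laid out in Section~\ref{sec:repair-setup}, the only non-elementary ingredient being a black-box invocation of Theorem~2 of \cite{TWB14}. First I would apply that theorem to pass from the given $(n,k,\ell)$-MSR code to an $(n-1,k-1,\ell)$-MSR code with \emph{constant repair matrices}; the redundancy $r=(n-1)-(k-1)=n-k$ is unchanged, so it suffices to produce an $(\ell,r)_\F$-MSR subspace family with $k-1$ subspaces out of this new code. Concretely, the new code comes with invertible parity matrices $C_{i,j}\in\F^{\ell\times\ell}$ for $i\in[r]$, $j\in[k-1]$, and for each $m\in[k-1]$ a single repair matrix $S_m\in\F^{\ell/r\times\ell}$ of full row rank.

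Second, I would instantiate the conditions \eqref{eq:self-disjoint} and \eqref{eq:fixed-by-others}, derived in Section~\ref{sec:repair-setup} for general repair matrices $S_{i,m}$: setting $S_{i,m}=S_m$ they read $\bigoplus_{i=1}^r\mathcal{R}(S_m C_{i,m})=\F^\ell$ for every $m\in[k-1]$, and $\mathcal{R}(S_m C_{1,m'})=\mathcal{R}(S_m C_{2,m'})=\cdots=\mathcal{R}(S_m C_{r,m'})$ for all $m\neq m'$ in $[k-1]$. Now define $H_m:=\mathcal{R}(S_m)$ for $m\in[k-1]$, a subspace of $\F^\ell$ of dimension $\ell/r$ since $S_m$ has full row rank, and define the maps $\Phi_{m,j}$, $j\in[r-1]$, by $\mathbf{x}\mapsto\mathbf{x}\,C_{j+1,m}C_{1,m}^{-1}$; each is invertible as a product of invertible matrices.

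Third, I would check the two axioms of Definition~\ref{def:msr-family}. For \eqref{eq:msr-1}, applying the invertible map $\mathbf{x}\mapsto\mathbf{x}\,C_{1,m}^{-1}$ to the decomposition $\bigoplus_{i=1}^r\mathcal{R}(S_m C_{i,m})=\F^\ell$ preserves both fullness and directness, yielding $\bigoplus_{i=1}^r\mathcal{R}(S_m C_{i,m}C_{1,m}^{-1})=\F^\ell$; the $i=1$ term is $\mathcal{R}(S_m)=H_m$, and for $i\ge 2$ the term equals $\Phi_{m,i-1}(H_m)$, which is exactly \eqref{eq:msr-1}. For \eqref{eq:msr-2}, fix $i\neq i'$ in $[k-1]$ and $j\in[r-1]$; then $\Phi_{i',j}(H_i)=\mathcal{R}(S_i C_{j+1,i'}C_{1,i'}^{-1})$, and by \eqref{eq:fixed-by-others} applied with repaired node $i$ and other node $i'$ we have $\mathcal{R}(S_i C_{j+1,i'})=\mathcal{R}(S_i C_{1,i'})$, so right-multiplying by the invertible $C_{1,i'}^{-1}$ gives $\Phi_{i',j}(H_i)=\mathcal{R}(S_i C_{1,i'}C_{1,i'}^{-1})=\mathcal{R}(S_i)=H_i$. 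Hence $H_1,\dots,H_{k-1}$ together with the maps $\Phi_{i,j}$ form an $(\ell,r)_\F$-MSR subspace family of size $k-1$.

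I expect no genuine obstacle: the single substantive step is the reduction to constant repair matrices, which is available off the shelf, and everything afterwards is the bookkeeping of how invertible right-multiplications act on row spans. The one point requiring care is matching index conventions — in Definition~\ref{def:msr-family} the map $\Phi_{i',j}$ attached to node $i'$ must fix $H_i$ for all $i\neq i'$, which corresponds in \eqref{eq:fixed-by-others} to the common left factor $S_i$ (the repaired node) together with the varying parity index on $C_{\bullet,i'}$.
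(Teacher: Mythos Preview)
Your proposal is correct and follows exactly the same approach as the paper: invoke Theorem~2 of \cite{TWB14} to pass to an $(n-1,k-1,\ell)$-MSR code with constant repair matrices, set $H_m=\mathcal{R}(S_m)$ and $\Phi_{m,j}(\mathbf{x})=\mathbf{x}\,C_{j+1,m}C_{1,m}^{-1}$, and read off \eqref{eq:msr-1} and \eqref{eq:msr-2} from \eqref{eq:self-disjoint} and \eqref{eq:fixed-by-others} respectively. Your write-up simply spells out the row-span bookkeeping that the paper leaves implicit.
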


For the reverse direction, the MSR subspace family can take care of the node repair, but one still needs to ensure the MDS property. This approach was taken in \cite{WTB12}, based on a construction of an $(\ell,r)_\F$-MSR subspace family of size $(r+1)\log_r \ell$. For completeness, we present another construction of an MSR subspace family in Section~\ref{sec:construction}. The subspaces in our construction are identical to \cite{WTB12} but we pick the linear maps differently, using just two distinct eigenvalues. As a result, our construction works over any field with more than two elements. In comparison, the approach in \cite{WTB12} used $k^{r-1} \ell/r$ distinct eigenvalues, and thus required a field that is bigger than this bound. It is an interesting question to see if the MDS property can be incorporated into our construction to give MSR codes with sub-packetization $r^{k/(r+1)}$ over smaller fields.

\section{Limitation of MSR subspace families}
In this section, we state and prove the following strong upper bound on the size of an MSR family of subspaces, showing that the construction claimed in Theorem~\ref{theorem:construction} is not too far from the best possible. This upper bound together with Proposition~\ref{prop:connection} immediately implies our main result, Theorem~\ref{thm:main-intro}.

\begin{theorem}
	\label{thm:subspaces-upperbound}
An $(\ell,r)_\F$-MSR subspace family can have at most 
\begin{equation*}
\frac{2\ln{\ell}}{\ln\left(\frac{r^2}{r^2-r+1}\right)}
\le \left(\tfrac{2r^2}{r-1}\right)\ln{\ell}
\end{equation*}
subspaces.
\end{theorem}

In the rest of the section, we prove the above theorem. Let $H_1,H_2,\dots,H_k$ be the subspaces in an $(\ell,r)_\F$-MSR subspace family with associated invertible linear maps $\Phi_{i, j}$ where $i \in [k]$ and $j \in [r-1]$. Note that these linear maps are in some sense statements about the structure of the spaces $H_1, H_2, \ldots, H_k$. They dictate the way the subspaces can interact with each other, thereby giving rigidity to the way they are structured.
 
The major insight and crux of the proof is the following definition on collections of subspaces. This definition is somewhat inspired by Galois Theory, in that we are looking at the space of linear maps on the vector space $\F^\ell$ that fix all the subspaces in question.

\begin{defn}
\label{def:invariant}
In the vector space $\mathcal{L}(\mathbb{F}^\ell, \mathbb{F}^\ell)$ of all linear maps from $\F^\ell$ to $\F^\ell$, define the subspace
\begin{equation*}
\mathcal{F}(A_1 \to B_1, \ldots , A_s \to B_s) \coloneqq \{\psi \in \mathcal{L}(\mathbb{F}^\ell, \mathbb{F}^\ell) \; | \; \psi(A_i) \subseteq B_i \; \forall i \in \{1, \ldots , s\}\}
\end{equation*}
for arbitrary subspaces $A_i,B_i$ of $\F^\ell$. Define the value 
\begin{equation*}
\I(A_1 \to B_1, \dots, A_s \to B_s) \coloneqq \dim(\mathcal{F}(A_1 \to B_1, \ldots , A_s \to B_s))
\end{equation*}
When $A_i=B_i$ for each $i$, we adopt the shorthand notation $\mathcal{F}(A_1, \ldots , A_s)$ and $\I(A_1, \ldots , A_s)$ to denote the above quantities. We will also use the mixed notation $\mathcal{F}(A_1, \ldots , A_{s-1}, A_s \to B_s)$ to denote $\mathcal{F}(A_1 \to A_1, \ldots , A_s \to B_s)$ and likewise for $\I(A_1, \ldots , A_{s-1}, A_s \to B_s)$.
\end{defn}

Thus $\I(A_1,\dots,A_s)$ is the dimension of the space of linear maps that map each $A_i$ within itself. We use the notation $\I()$ to suggest such an invariance. The key idea will be to cleverly exploit the invertible maps $\Phi_{i,j}$ associated with each $H_i$ to argue that the dimension $\I(H_1, H_2, \ldots, H_t)$ shrinks by a constant factor whenever we add in an $H_{t+1}$ into the collection. Specifically, we will show that the dimension shrinks at least by a factor of $\frac{r^2-r+1}{r^2}$ for each newly added $H_{t+1}$. Because the identity map is always in $\mathcal{F}(H_1, H_2, \ldots , H_k)$, the dimension $\I(H_1, H_2, \ldots, H_k)$ is at least $1$.  As the ambient space of linear maps from $\F^\ell \to \F^\ell$ has dimension $\ell^2$, this leads to an $O(r \log \ell)$ upper bound on $k$.  We begin with the following lemma.

\begin{lemma}
\label{lem:dim-ub}
Let $U_1, U_2, \ldots , U_s \le \mathbb{F}^p$, $s \ge 2$ be arbitrary subspaces such that $\bigcap_{i = 1}^s{U_i} = \{0\}$. Then following inequality holds:
\begin{equation*}
\sum_{i=1}^s{\dim(U_i)} \le (s-1)\dim\left(U_1 + \ldots + U_s\right) \ .
\end{equation*}
\end{lemma}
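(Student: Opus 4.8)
The plan is to induct on $s$, the number of subspaces. The base case $s=2$ is immediate: if $U_1 \cap U_2 = \{0\}$ then $\dim(U_1) + \dim(U_2) = \dim(U_1 + U_2)$ by the standard modular (inclusion-exclusion) identity for subspace dimensions, which is exactly the claimed bound with equality. For the inductive step, suppose the statement holds for any collection of $s-1$ subspaces with trivial common intersection (in any ambient space), and let $U_1,\dots,U_s \le \F^p$ with $\bigcap_{i=1}^s U_i = \{0\}$.

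The key maneuver is to pass to a quotient so that the smaller collection again has trivial intersection. Let $W = U_1 + U_2 + \dots + U_s$ and consider the quotient map $\pi : W \to W/U_s$. Write $\bar U_i = \pi(U_i) = (U_i + U_s)/U_s$ for $i \in [s-1]$. I would first check that $\bigcap_{i=1}^{s-1} \bar U_i = \{0\}$ in $W/U_s$: an element $\bar x$ lying in every $\bar U_i$ lifts, after adjusting by something in $U_s$, to a vector in $U_i$ for each $i \le s-1$ simultaneously — more carefully, $\bar x \in \bar U_i$ means $x \in U_i + U_s$, and one shows any common element of all the $\bar U_i$ can be represented by an $x \in \bigcap_{i=1}^{s-1} U_i$, whence $\bar x = \pi(x)$ and... — actually the clean statement is that $\bar U_i \cap (\text{stuff})$; the honest way is: if $\bar x \in \bigcap_{i\le s-1}\bar U_i$ then for each $i$ there is $u_i \in U_i$ with $u_i \equiv x \pmod{U_s}$, so all $u_i$ are congruent mod $U_s$; but this does not immediately force $\bar x = 0$. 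The correct fix is to instead choose, for a basis of $\bigcap \bar U_i$, lifts that show $\bigcap_{i\le s-1}(U_i+U_s) \supseteq$ a complement of $U_s$ of that dimension, and then use $\bigcap_{i=1}^{s} U_i=\{0\}$, i.e. $(\bigcap_{i\le s-1}U_i)\cap U_s=\{0\}$, to bound $\dim\bigcap_{i\le s-1}\bar U_i \le \dim\bigcap_{i\le s-1}U_i$. Granting $\bigcap_{i=1}^{s-1}\bar U_i=\{0\}$, the induction hypothesis applied inside $W/U_s$ gives
\begin{equation*}
\sum_{i=1}^{s-1}\dim(\bar U_i) \le (s-2)\dim(\bar U_1 + \dots + \bar U_{s-1}) = (s-2)\dim(W/U_s) = (s-2)\bigl(\dim W - \dim U_s\bigr),
\end{equation*}
using that $\bar U_1 + \dots + \bar U_{s-1} = \pi(U_1 + \dots + U_{s-1}) = W/U_s$ since $U_s \subseteq W$.

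Finally I would convert back: $\dim \bar U_i = \dim(U_i + U_s) - \dim U_s \ge \dim U_i - \dim U_s$ is the wrong direction, so instead use $\dim \bar U_i = \dim U_i - \dim(U_i \cap U_s) \ge \dim U_i - \dim U_s$, giving $\sum_{i=1}^{s-1}\dim U_i \le \sum_{i=1}^{s-1}\dim\bar U_i + (s-1)\dim U_s$. Combining with the displayed bound yields $\sum_{i=1}^{s-1}\dim U_i \le (s-2)\dim W - (s-2)\dim U_s + (s-1)\dim U_s = (s-2)\dim W + \dim U_s$, hence $\sum_{i=1}^{s}\dim U_i \le (s-2)\dim W + 2\dim U_s \le (s-1)\dim W$ provided $\dim U_s \le \dim W$, which holds trivially. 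The main obstacle is the bookkeeping around the quotient: one must verify carefully that the reduced family $\bar U_1,\dots,\bar U_{s-1}$ still has trivial common intersection (or more precisely, intersection of dimension at most $\dim\bigcap_{i\le s-1}U_i$, which then combines with $U_s$ disjointness), and one should double-check the inequality $\dim U_s \le \dim W$ suffices — it does, with plenty of slack, which incidentally suggests the bound is not tight in general but is exactly what is needed downstream. An alternative, possibly cleaner, route avoiding quotients is a direct linear-algebra argument: pick a basis of $W$ adapted to $U_s$ and count, or set up the map $\bigoplus_i U_i \to W$, $(x_1,\dots,x_s)\mapsto \sum x_i$, analyze its kernel (which injects into $\bigoplus_{i<s}(U_i \cap (U_1+\dots+\widehat{U_i}+\dots+U_s))$-type spaces), but the quotient induction is the least error-prone to write up.
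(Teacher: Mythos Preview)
Your quotient approach has a genuine gap at exactly the point you flagged but did not resolve: the images $\bar U_1,\dots,\bar U_{s-1}$ in $W/U_s$ need \emph{not} have trivial common intersection, and the weaker bound $\dim\bigcap_{i<s}\bar U_i\le\dim\bigcap_{i<s}U_i$ you propose as a fix is also false. A concrete counterexample: take $s=3$ in $\F^2$ with $U_1=\langle e_1\rangle$, $U_2=\langle e_1+e_2\rangle$, $U_3=\langle e_2\rangle$. Then $U_1\cap U_2\cap U_3=\{0\}$ and $U_1\cap U_2=\{0\}$, yet $\bar U_1=\bar U_2=W/U_3$ is one-dimensional, so $\dim(\bar U_1\cap\bar U_2)=1>0$. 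Consequently the inductive inequality you need, $\dim\bar U_1+\dim\bar U_2\le (s-2)\dim(W/U_3)=1$, fails (the left side is $2$). The underlying issue is that an element of $\bigcap_{i<s}(U_i+U_s)$ decomposes as $u_i+s_i$ with $u_i\in U_i$ for each $i$ separately, but the $u_i$ are different vectors, so there is no lift to $\bigcap_{i<s}U_i$. (There is also an arithmetic slip at the end: $(s-2)\dim W+2\dim U_s\le(s-1)\dim W$ requires $2\dim U_s\le\dim W$, not merely $\dim U_s\le\dim W$, and this too fails e.g.\ when the $U_i$ are hyperplanes.)

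The paper's induction goes in the opposite direction: rather than quotienting out one subspace, it \emph{intersects} two of them. Writing $\dim U_1+\dim U_2=\dim(U_1+U_2)+\dim(U_1\cap U_2)$, one applies the induction hypothesis to the $s-1$ subspaces $U_1\cap U_2,\,U_3,\dots,U_s$, whose common intersection is contained in $\bigcap_{i=1}^s U_i=\{0\}$ automatically. This yields $\sum_{i=1}^s\dim U_i\le\dim(U_1+U_2)+(s-2)\dim\bigl((U_1\cap U_2)+U_3+\dots+U_s\bigr)\le(s-1)\dim W$. Replacing two subspaces by their intersection preserves the trivial-intersection hypothesis for free; quotienting by one of them does not.
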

\begin{proof}
We proceed by inducting on $s$. Indeed, when $s = 2$, we have from the Principle of Inclusion and Exclusion (PIE)
\begin{equation*}
\dim(U_1) + \dim(U_2) = \dim(U_1 + U_2) + \dim(U_1 \cap U_2) = \dim(U_1 + U_2)
\end{equation*} 
And thus the base case holds. Now, if the inequality holds when $s = p$, then we have via the Principle of Inclusion and Exclusion
\begin{equation}
\sum_{i=1}^{p+1}{\dim(U_i)} = \dim(U_1 + U_2) + \dim(U_1 \cap U_2) + \sum_{i=3}^{p+1}{\dim(U_i)} \label{eq:dim-ub-1}
\end{equation}
By the induction hypothesis, we deduce that Equation~\eqref{eq:dim-ub-1} is at most
\begin{equation}
\dim(U_1 + U_2) + (p-1) \dim((U_1 \cap U_2) + \cdots + U_{p+1}) \label{eq:dim-ub-2}
\end{equation}
And Equation~\eqref{eq:dim-ub-2} is at most
\begin{equation}
p \dim(U_1 + U_2 + \cdots + U_{p+1}) \label{eq:dim-ub-3}
\end{equation}
By combining Equations~\eqref{eq:dim-ub-1}, ~\eqref{eq:dim-ub-2}, and ~\eqref{eq:dim-ub-3}, we deduce that the inequality also holds when $s = p+1$. Since the base case $s = 2$ holds, we therefore conclude that the inequality holds for all integers $s \ge 2$.  
\end{proof}
Next, we prove an identity for MSR subspace families that will come in handy. For the sake of brevity, we use the shorthands $\mathcal{H}_a \coloneqq \{H_1, \ldots , H_a\}$ and $\Phi_{a,0}$ to denote the identity map.

\begin{lemma}
\label{lem:spaces-red}
Given an $(\ell,r)_\F$-MSR subspace family of $H_1, H_2, \ldots , H_k$, we have for any $t \in [k]$ and $i,s \in \{0, 1, \ldots , r-1\}$ that
\small
\begin{equation}
\sum_{j=0}^s{\I(\mathcal{H}_{t-1}, \Phi_{t,i}(H_t) \to \Phi_{t,j}(H_t))} \le s\I(\mathcal{H}_{t-1}, H_t \to 0) + \I(\mathcal{H}_{t-1}, \Phi_{t,i}(H_t) \to \oplus_{j=0}^s\Phi_{t,j}(H_t)) \label{eq:induct-eq}
\end{equation}
\normalsize
\end{lemma}
\begin{proof}
We proceed by inducting on $s$. The base case when $s=0$ is clear as the right hand side simplifies to the left hand side. Now, if Equation~\eqref{eq:induct-eq} holds when $s=p$ and $p < r-1$, then we have via the Principle of Inclusion and Exclusion (PIE) and Equation~\eqref{eq:msr-1}
\begin{equation}
\sum_{j=0}^{p+1}{\I(\mathcal{H}_{t-1}, \Phi_{t,i}(H_t) \to \Phi_{t,j}(H_t))} \label{eq:space-red-1}
\end{equation}
By the induction hypothesis, we deduce that Equation~\eqref{eq:space-red-1} is at most
\begin{equation}
p\I(\mathcal{H}_{t-1}, H_t \to 0) + \I(\mathcal{H}_{t-1}, \Phi_{t,i}(H_t) \to \oplus_{j=0}^p\Phi_{t,j}(H_t)) + \I(\mathcal{H}_{t-1}, \Phi_{t,i}(H_t) \to \Phi_{t,p+1}(H_t)) \label{eq:space-red-2}
\end{equation}
By applying the Principle of Inclusion and Exclusion and Equation~\ref{eq:msr-1}, we deduce that Equation~\eqref{eq:space-red-2} is at most 
\begin{equation}
p\I(\mathcal{H}_{t-1}, H_t \to 0) + \I(\mathcal{H}_{t-1}, \Phi_{t,i}(H_t) \to 0) + \I(\mathcal{H}_{t-1}, \Phi_{t,i}(H_t) \to \oplus_{j=0}^{p+1}\Phi_{t,j}(H_t)) \label{eq:space-red-3}
\end{equation}
And Equation~\eqref{eq:space-red-3} is equal to
\begin{equation}
(p+1)\I(\mathcal{H}_{t-1}, H_t \to 0) + \I(\mathcal{H}_{t-1}, \Phi_{t,i}(H_t) \to \oplus_{j=0}^{p+1}\Phi_{t,j}(H_t)) \label{eq:space-red-4}
\end{equation}
And so combining Equations~\eqref{eq:space-red-1},~\eqref{eq:space-red-2},~\eqref{eq:space-red-3}, and~\eqref{eq:space-red-4}, we deduce that Equation~\eqref{eq:induct-eq} also holds when $s = p+1$. Since the base case $s = 0$ holds, we therefore conclude that the inequality holds for all $s \in \{0, 1, \ldots , r-1\}$. 
\end{proof}

Following Lemma~\ref{lem:spaces-red} and Equation~\eqref{eq:msr-1}, we deduce when $s=r-1$ the following corollary.

\begin{corollary}
\label{cor:dim-red}
Given an $(\ell,r)_\F$-MSR subspace family of $H_1, H_2, \ldots , H_k$, we have for any $t \in [k]$ and $i \in \{0, 1, \ldots , r-1\}$ that
\begin{equation*}
\sum_{j=0}^{r-1}{\I(\mathcal{H}_{t-1}, \Phi_{t,i}(H_t) \to \Phi_{t,j}(H_t))} \le (r-1)\I(\mathcal{H}_{t-1}, H_t \to 0) + \I(\mathcal{H}_{t-1})
\end{equation*}
\end{corollary}

\noindent
We are now ready to establish the key iterative step, showing geometric decay of the dimension $\I(H_1,\dots,H_t)$ in $t$.

\begin{lemma}
\label{lem:geom-decay}
For each $t =1,2, \ldots ,k$, the following holds
\begin{equation}
\I(H_1, \ldots , H_{t-1}, H_t)  \le \left(\frac{r^2-r+1}{r^2}\right) \ \I(H_1, \ldots , H_{t-1})   \ . \label{eq:geom-decay}
\end{equation}
\end{lemma}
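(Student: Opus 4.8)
The plan is to compare $\mathcal{I}(H_1,\dots,H_t)$ with $\mathcal{I}(H_1,\dots,H_{t-1})$ by exploiting the maps $\Phi_{t,1},\dots,\Phi_{t,r-1}$ attached to $H_t$. Write $V=\mathcal{F}(H_1,\dots,H_{t-1})$ (with $V=\mathcal{L}(\F^\ell,\F^\ell)$ when $t=1$), $b=\dim V=\mathcal{I}(H_1,\dots,H_{t-1})$, $W=\mathcal{F}(H_1,\dots,H_t)$ and $a=\dim W=\mathcal{I}(H_1,\dots,H_t)$; note $W=\{\psi\in V:\psi(H_t)\subseteq H_t\}$. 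Set $G_0=H_t$ and $G_j=\Phi_{t,j}(H_t)$ for $j\in[r-1]$, so that \eqref{eq:msr-1} says $\F^\ell=\bigoplus_{j=0}^{r-1}G_j$. The crucial structural feature is that each $\Phi_{t,j}$ (hence also $\Phi_{t,j}^{-1}$) leaves every $H_i$ with $i<t$ invariant by \eqref{eq:msr-2}, so pre- or post-composition with $\Phi_{t,j}^{\pm1}$ carries $V$ bijectively onto itself. We may assume $r\ge 2$ (for $r=1$ the notion of an MSR subspace family is degenerate).

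First I would introduce the ``shifted'' subspaces $Y_j:=\{\psi\in V:\psi(G_j)\subseteq G_0\}$ for $j=0,1,\dots,r-1$, so that $Y_0=W$. Post-composition $\psi\mapsto\psi\circ\Phi_{t,j}$ is a linear isomorphism from $Y_j$ onto $W$, with inverse $\phi\mapsto\phi\circ\Phi_{t,j}^{-1}$ --- the checks use only $\Phi_{t,j}(H_i)=H_i$ for $i<t$ and $\Phi_{t,j}(H_t)=G_j$ --- so $\dim Y_j=a$ for every $j$. Since the $G_j$ span $\F^\ell$, we have $\bigcap_{j=0}^{r-1}Y_j=\{\psi\in V:\psi(\F^\ell)\subseteq G_0\}=:K^\ast$. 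Passing to $V/K^\ast$ (legitimate because $K^\ast\subseteq Y_j$ for all $j$) and applying Lemma~\ref{lem:dim-ub} to the $r$ subspaces $Y_j/K^\ast$, whose common intersection is trivial, gives $r(a-\dim K^\ast)\le (r-1)(b-\dim K^\ast)$, i.e.
\[
a\ \le\ \frac{(r-1)\,b+\dim K^\ast}{r}\ .
\]

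It then remains to bound $\dim K^\ast$, and this is the delicate point: the analogous argument with ``restriction'' subspaces $\{\psi\in V:\psi(H_t)\subseteq G_j\}$ only yields $\dim K^\ast\le\frac{r-1}{r}b$, which is not strong enough once $r\ge 3$. Instead I would use \emph{image} conditions. Put $K^\ast_j:=\{\psi\in V:\operatorname{im}(\psi)\subseteq G_j\}$, so $K^\ast_0=K^\ast$, and post-composition with $\Phi_{t,j}$ shows $\dim K^\ast_j=\dim K^\ast$ for all $j$. The key gain is that these $K^\ast_j$ are not merely a family with trivial common intersection but are genuinely \emph{independent}: since $\operatorname{im}(\psi_0+\cdots+\psi_{r-1})\subseteq\sum_j\operatorname{im}(\psi_j)$ and $\F^\ell=\bigoplus_j G_j$, one gets $K^\ast_i\cap\sum_{j\ne i}K^\ast_j\subseteq\{\psi:\operatorname{im}(\psi)\subseteq G_i\cap\bigoplus_{j\ne i}G_j\}=\{0\}$. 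Hence $r\dim K^\ast=\sum_j\dim K^\ast_j=\dim\bigl(\bigoplus_j K^\ast_j\bigr)\le\dim V=b$, so $\dim K^\ast\le b/r\le b/2$ (using $r\ge2$). Substituting into the displayed inequality gives $a\le\frac{(r-1)b+b/2}{r}=\frac{2r-1}{2r}\,b$, which is exactly \eqref{eq:geom-decay}.

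The main obstacle is precisely this last step: one must squeeze a factor $\tfrac1r$, rather than the cheap $\tfrac{r-1}{r}$, out of the ``degenerate'' subspace $K^\ast$ of maps whose image lands inside $H_t$. The only leverage for doing so is the direct-sum decomposition $\F^\ell=H_t\oplus\bigoplus_j\Phi_{t,j}(H_t)$ from \eqref{eq:msr-1} combined with the fact that the $\Phi_{t,j}$ normalize $V$; everything else (the isomorphisms permuting the $Y_j$ and the $K^\ast_j$, and the two invocations of Lemma~\ref{lem:dim-ub} / independence) is a routine verification that composing with $\Phi_{t,j}$ preserves the defining incidence conditions.
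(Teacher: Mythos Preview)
Your argument is correct, but it is organized differently from the paper's.  The paper avoids the two-stage analysis altogether: instead of the $r$ subspaces $Y_j=\{\psi\in V:\psi(G_j)\subseteq H_t\}$, it works with the $2r$ subspaces
\[
\mathcal{F}(H_1,\dots,H_{t-1},\,G_j\to H_t)\quad\text{and}\quad \mathcal{F}(H_1,\dots,H_{t-1},\,G_j\to \Phi_{t,1}(H_t)),\qquad j=0,\dots,r-1,
\]
each of which is isomorphic to $W$ via $\psi\mapsto\psi\circ\Phi_{t,j}^{-1}$ (respectively $\psi\mapsto\Phi_{t,1}\circ\psi\circ\Phi_{t,j}^{-1}$).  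Because the $G_j$ span $\F^\ell$ and $H_t\cap\Phi_{t,1}(H_t)=\{0\}$, the common intersection of all $2r$ of these subspaces is already $\{0\}$, so a single invocation of Lemma~\ref{lem:dim-ub} with $s=2r$ immediately yields $2r\,a\le(2r-1)\,b$.  Your route instead applies Lemma~\ref{lem:dim-ub} only to the first $r$ subspaces, which forces you to quotient by the residual intersection $K^\ast$ and then bound $\dim K^\ast$ separately by exploiting the genuine independence of the image spaces $K^\ast_j$.  (A minor quibble: the isomorphism $K^\ast_0\to K^\ast_j$ is $\psi\mapsto\Phi_{t,j}\circ\psi$, composition on the left, since it is the image rather than the domain that is being shifted.)  As a dividend, your two-step argument actually delivers the sharper constant $\frac{r^2-r+1}{r^2}\le\frac{2r-1}{2r}$ before you relax $\dim K^\ast\le b/r$ to $b/2$; the paper's one-shot argument is cleaner but lands exactly on $\frac{2r-1}{2r}$.
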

\begin{proof}
Recall that by the property of an $(\ell,r)_\F$-MSR subspace family, the maps $\Phi_{t,j}$, $j \in \{0,1,\dots,r-1\}$, leave $H_1,\dots,H_{t-1}$ invariant. Using this it follows that $\I(\mathcal{H}_{t-1}, H_t) = \I(\mathcal{H}_{t-1},$ $\Phi_{t,i}(H_t) \to \Phi_{t,j}(H_t))$ for each $i,j \in \{0,1,\dots,r-1\}$, since we have an isomorphism $\mathcal{F}(\mathcal{H}_{t-1}, H_t) \to \mathcal{F}(\mathcal{H}_{t-1}, \Phi_{t,i}(H_t) \to \Phi_{t,j}(H_t))$ given by $\psi \mapsto \Phi_{t,j} \circ \psi \circ \Phi_{t,i}^{-1}$. Thus we have 
\begin{equation}
r^2 \cdot \I(\mathcal{H}_{t-1}, H_t) = \sum_{i=0}^{r-1}{\sum_{j=0}^{r-1}{\I(\mathcal{H}_{t-1}, \Phi_{t,i}(H_t) \to \Phi_{t,j}(H_t))}} \ . \label{eq:ineq-1}
\end{equation}
Notice the the inner sum is the same as the left hand side in Corollary~\ref{cor:dim-red}. Thus we are able to apply Corollary~\ref{cor:dim-red}  on Equation~\eqref{eq:ineq-1} to find that
\begin{align}
\sum_{i=0}^{r-1}{\sum_{j=0}^{r-1}{\I(\mathcal{H}_{t-1}, \Phi_{t,i}(H_t) \to \Phi_{t,j}(H_t))}} &\le \sum_{i=0}^{r-1}{\left[ (r-1)\I(\mathcal{H}_{t-1}, \Phi_{t,i}(H_t) \to 0) + \I(\mathcal{H}_{t-1}) \right]} \nonumber \\
&= r\I(\mathcal{H}_{t-1}) + (r-1)\sum_{i=0}^{r-1}{\I(\mathcal{H}_{t-1}, \Phi_{t,i}(H_t) \to 0)} \label{eq:ineq-2} \ .
\end{align}
Now we observe that the only linear transformation of $\F^\ell$ that maps $\Phi_{t,i}(H_t) \to 0$ for all $i \in \{0,1,\dots,r-1\}$ simultaneously is the identically $0$ map. This is because $\bigoplus_{j=0}^{r-1} \Phi_{t,j}(H_t) = \F^\ell$ from Equation~\ref{eq:msr-1}. Thus we are in a situation where Lemma~\ref{lem:dim-ub} applies, and we have 
\begin{align}
r\I(\mathcal{H}_{t-1}) + (r-1)\sum_{i=0}^{r-1}{\I(\mathcal{H}_{t-1}, \Phi_{t,i}(H_t) \to 0)} &\le r\I(\mathcal{H}_{t-1}) + (r-1) \cdot (r-1)\I(\mathcal{H}_{t-1}) \nonumber \\
&= (r^2-r+1)\I(\mathcal{H}_{t-1})  \label{eq:ineq-3}
\end{align}
Combining Equations~\eqref{eq:ineq-1}, \eqref{eq:ineq-2}, and \eqref{eq:ineq-3}, we conclude Equation~\eqref{eq:geom-decay} as desired.
\end{proof}

We are now ready to finish off the proof of our claimed upper bound on the size $k$ of an $(\ell,r)_\F$-MSR family.

\begin{proof}[Proof of Theorem~\ref{thm:subspaces-upperbound}]
Since the identity map belongs to the space of $\I(H_1, \ldots , H_k)$, by applying
Lemma~\ref{lem:geom-decay} 
 inductively on $H_1, H_2, \ldots , H_k$, we obtain the inequality
\begin{equation*}
 1 \le \I(H_1, \ldots , H_k) \le \left(\frac{r^2-r+1}{r^2}\right)^k \cdot \ell^2 \ ,
\end{equation*}
from which we find that
\begin{equation*}
k \le \left(\frac{2\ln{\ell}}{\ln\left(\frac{r^2}{r^2-r+1}\right)}\right) \le \left(\frac{2\ln{\ell}}{\frac{r-1}{r^2}}\right) = \left(\frac{2r^2}{r-1}\right)\ln{\ell}
\end{equation*}
where the second inequality follows because $ \ln(1+x) \ge \frac{x}{1+x}$ for all $x > -1$.
We thus have the claimed upper bound.
\end{proof}

\appendix
\section{Proof of Theorem~\ref{theorem:construction}}
In this section, we state and prove an alternate construction of an MSR subspace family of size $(r+1)\log_r{\ell}$. The first construction of an $(\ell, r)_\F$-MSR subspace family of size $(r+1)\log_r{\ell}$ that also satisfied the MDS property was shown in \cite{WTB12} for fields of size more than $k^{r-1}\ell/r$ elements. Without the MDS property, the field size needed to be more than $r$ elements to show that the construction satisfied the node repair property.\medskip

Our construction uses subspaces that are identical to the ones in \cite{WTB12}, but we choose different linear maps that required only two distinct eigenvalues. As a result, our construction works over all fields with more than two elements. It remains a very interesting question whether the MDS property can be additionally incorporated into our construction to yield MSR codes with sub-packetization $r^{k/(r+1)}$ over smaller fields.

\begin{theorem}\label{theorem:construction}
	For $|\F| > 2$ and $r \ge 2$, there exists an $(\ell = r^m,r)_\F$-MSR subspace family of $(r+1)m = (r+1)\log_r(\ell)$ subspaces.
\end{theorem}

In the rest of the section, we will prove the theorem above.

To give a general view of our construction, we first shift our view of the ambient space $\F^\ell = \F^{r^m}$ to $(\F^r)^{\otimes m}$, vectors that consist of $m$ tensored vectors in $\F^r$. We then consider a collection of vectors $T \coloneqq \{v_1, v_2, \ldots , v_r, v_{r+1}\}$, situated in $\F^r$, such that any $r$ of them form a basis in $\F^r$. The subspace $A_{k,i}$ will be all vectors in $(\F^r)^{\otimes m}$ whose $k$'th position in the $m$ tensored vectors is the vector $v_i$.\medskip

The $r-1$ associated linear maps $\Phi_{(k,i), 1}, \ldots ,$ $\Phi_{(k,i),r-1}$ of the subspace $A_{k,i}$ will simply focus on transforming the $k$'th position of each vector while retaining all remaining positions. Specifically, on the $k$'th position, it will scale all vectors in $T \setminus \{v_i\}$. The linear map $\Phi_{(k,i), t}$ will scale $v_{i+t}$ by a factor $\lambda \neq 1$ while all other vectors in $T \setminus \{v_i\}$ will be identically mapped, where the indices are taken modulo $r+1$. That way, everything in $T \setminus \{v_i\}$ will stay almost the same while $v_i$ along with the $r-1$ images of $v_i$ will form a basis for $\F^r$ in the $k$'th position.

\begin{proof}
Let $\ell=r^m$, and let $V = (\F^r)^{\otimes m} \simeq \mathbb{F}^\ell$ be the ambient space. Consider a set of vectors $\{v_1, v_2, \ldots , v_r, v_{r+1}\} \subset \mathbb{F}^r$ for which the first $r$ form a basis in $\mathbb{F}^r$ and satisfy the equation
\begin{equation*}
v_1 + v_2 + \ldots + v_r + v_{r+1} = 0 \label{eq:circularity}
\end{equation*}
For $k \in [m]$ and $i \in [r+1]$, we define our $(r+1)m$ subspaces to be
\begin{equation*}
A_{k, i} \coloneqq \text{span}(v_{i_1} \otimes \ldots \otimes v_{i_m} \; | \; i_j \in [r+1], i_k = i)
\end{equation*}
which is a subspace of $V$. Observe that while the $k$'th position is fixated for any vector in $A_{k,i}$, the remaining  $m-1$ positions are free to choose from any $r$ vectors in $\F^r$. Through this observation, we see that $\dim(A_{k, i}) = r^{m-1} = \ell/r$. \smallskip

To properly define the associated linear maps of the subspace family, it suffices to show their mapping for the basis
\begin{equation*}
S_i \coloneqq \{v_{i_1} \otimes \ldots \otimes v_{i_m} \; | \; i_j \in [r+1] \setminus \{i\}\}
\end{equation*}
of $V$. Since $|\F| > 2$, then we can fix a constant $\lambda \in \F$ with $\lambda \notin \{0, 1\}$, which we will use as an eigenvalue across all $(r-1)(r+1)m$ linear maps. For each $t \in [r-1]$, the linear map $\Phi_{(k, i), t}$ will scale all vectors in $S_i$ whose $k$'th position is $v_{i+t}$ by a factor $\lambda$ and identically all remaining vectors in $S_i$, where indices are taken modulo $r+1$. Namely, for $i_k = i+t$,
\begin{equation*}
v_{i_1} \otimes \ldots \otimes v_{i_k} \otimes \ldots \otimes v_{i_m} \xmapsto{\Phi_{(k, i), t}} v_{i_1} \otimes \ldots \otimes (\lambda v_{i_k}) \otimes \ldots \otimes v_{i_m}
\end{equation*}
And for $i_k \in [r+1] \setminus \{i+t, i\}$,
\begin{equation*}
v_{i_1} \otimes \ldots v_{i_k} \otimes  \ldots \otimes v_{i_m} \xmapsto{\Phi_{(k, i), t}} v_{i_1} \otimes \ldots \otimes v_{i_k} \otimes \ldots \otimes v_{i_m}
\end{equation*}
Observe that all the vectors in the basis $S_i$ are scaled by either $1$ or $\lambda$, which means that the image $\Phi_{(k,i), t}(S_i)$ is also a basis for $V$. This tells us that $\Phi_{(k, i), t}$ is an invertible linear map. It now remains to show Properties~\ref{eq:msr-1} and \ref{eq:msr-2} hold for our given subspaces and linear maps.

To show Property~\ref{eq:msr-1}, we can use Equation~\eqref{eq:circularity} to rewrite $v_i$ as $v_i = -\sum_{j \in [r+1] \setminus \{i\}}{v_i}$. This shows us that when the $k$'th position of a vector is $v_i$, then $\Phi_{(k,i),t}$ will map it as
\begin{equation*}
v_{i_1} \otimes \ldots \otimes v_{i_k} \otimes  \ldots \otimes v_{i_m} \xmapsto{\Phi_{(k, i), t}} v_{i_1} \otimes \ldots \otimes (v_{i} - (\lambda-1)v_{i+t}) \otimes \ldots \otimes v_{i_m}
\end{equation*}
Since $\lambda \neq 1$, then the set $\{v_i, v_i - (\lambda-1)v_{i+1}, \ldots , v_i - (\lambda-1)v_{i+r-1}\}$ forms a basis for $\mathbb{F}^r$. Thus for vector $\mathbf{v} = v_{i_1} \otimes \ldots \otimes v_{i_{k-1}} \otimes v_i \otimes v_{i_{k+1}} \otimes \ldots \otimes v_{i_m}$, the vectors $\{\mathbf{v}, \Phi_{(k,i),1}(\mathbf{v}), \ldots , \Phi_{(k,i),r-1}(\mathbf{v})\}$ span all of $\F^r$ in the $k$'th position. Because we are free to choose any vector in all remaining positions, then are all able to span all of $V$ for all such $\mathbf{v}$. That is, we find that
\begin{equation*}
A_{k, i} \oplus \left(\bigoplus_{t=1}^{r-1}{\Phi_{(k, i), t}(A_{k, i})}\right) = \mathbb{F}^\ell
\end{equation*}
this shows Property~\ref{eq:msr-1}.\medskip

To show \eqref{eq:msr-2}, we start by breaking the subspace $A_{k', i'}$ into two possibilities:
\begin{enumerate}
	
	\item For the case  when $k' \neq k$, the subspace $A_{k', i'}$ remains invariant under each $\Phi_{(k,i),t}$ as they only linearly transform the $k$'th position while retaining all other positions.
	
	\item For the case when $k' = k$ and $i' \neq i$, the subspace $A_{k, i'}$ is an eigenspace for $\Phi_{(k, i), t}$. Namely, when $i' \neq i+t$, $A_{k, i'}$ is the eigenspace of eigenvalue $1$. When $i' = i+t$, the eigenvalue is instead $\lambda$.
	
\end{enumerate}
This shows that \eqref{eq:msr-2} also holds.	
\end{proof}

\section{Proof of the Cutset bound}
\label{app:cutset}
\begin{proof}
Consider an $(n, k, \ell)$-MDS vector code that stores a file $\mathcal{M}$ of size $k\ell$ in storage nodes $s_1, s_2, \ldots, s_n$. The MDS vector code will repair a storage node $s_h$ by making every other storage node $s_i$ communicate $\beta_{i, h}$ bits to $s_h$. From the MDS property, we know that any collection $\mathcal{C} \subseteq [n] \setminus \{h\}$ of $k-1$ of nodes $\{s_i\}_{i \in \mathcal{C}}$ along with $s_h$ is able to construct our original file $\mathcal{M}$. Thus the collective information of these $k$ storage nodes is at least $|\mathcal{M}| =  k\ell$, implying the inequality 
\begin{equation}\label{eq:info-ineq-1}
\sum_{i \in \mathcal{C}}{|s_i|} + \sum_{i \in [n] \backslash \mathcal{C} \cup \{h\}}{\beta_{i, h}} \ge k\ell.
\end{equation}
Since every storage node stores $\ell$ bits ($|s_i| = \ell$), then \eqref{eq:info-ineq-1} reduces down to
\begin{equation}\label{eq:info-ineq-2}
\sum_{i \in [n] \setminus (\mathcal{C} \cup \{h\})}{\beta_{i, h}} \ge \ell.
\end{equation}
Hence \eqref{eq:info-ineq-2} implies that any $n-k$ helper storage nodes collectively communicate at least $\ell$ bits. Thus we find from \eqref{eq:info-ineq-2} by summing over all possible $n-k$ collections of helper storage nodes
\begin{equation}\label{eq:info-ineq-3}
\sum_{i \in [n] \setminus \{h\}}{\beta_{i, h}} \ge \frac{(n-1)}{(n-k)} \cdot \ell.
\end{equation}
Which is the claimed cutset bound. Moreover, to achieve equality for \eqref{eq:info-ineq-3}, equality must be achieved for \eqref{eq:info-ineq-2} over all $n-k$ collections of helper storage nodes. That is possible only when $\beta_{i, h} = \ell/(n-k)$ for all $i \in[n] \setminus \{h\}$. Hence, under optimal repair bandwidth, the total information communicated is $\sum_{i=2}^n{\beta_{i, h}} = (n-1)\ell/(n-k)$ and is only achieved when every helper storage node communicates exactly $\ell/(n-k)$ bits to storage node $s_h$.
\end{proof}


\begin{thebibliography}{10}

\bibitem{AG-stoc19}
O.~Alrabiah and V.~Guruswami.
\newblock An exponential lower bound on the sub-packetization of {MSR} codes.
\newblock In {\em Proceedings of the 51st Annual ACM Syposium on Theory of
  Computing}, pages 979--985, 2019.

\bibitem{BK18}
S.~B. Balaji and P.~V. Kumar.
\newblock A tight lower bound on the sub- packetization level of optimal-access
  {MSR} and {MDS} codes.
\newblock In {\em Proceedings of the {IEEE} International Symposium on
  Information Theory}, pages 2381--2385, 2018.

\bibitem{BBBM94}
M.~Blaum, J.~Brady, J.~Bruck, and J.~Menon.
\newblock {EVENODD:} an optimal scheme for tolerating double disk failures in
  {RAID} architectures.
\newblock In {\em Proceedings of the 21st Annual International Symposium on
  Computer Architecture}, pages 245--254, 1994.

\bibitem{Cadambe_poly}
V.~R. Cadambe, C.~Huang, J.~Li, and S.~Mehrotra.
\newblock Polynomial length {MDS} codes with optimal repair in distributed
  storage.
\newblock In {\em Proc. of Forty Fifth Asilomar Conference on Signals, Systems
  and Computers (ASILOMAR)}, pages 1850--1854, Nov 2011.

\bibitem{CJMRS13}
V.~R. Cadambe, S.~A. Jafar, H.~Maleki, K.~Ramchandran, and C.~Suh.
\newblock Asymptotic interference alignment for optimal repair of {MDS} codes
  in distributed storage.
\newblock {\em IEEE Transactions on Information Theory}, 59(5):2974--2987, May
  2013.

\bibitem{DGWWR10}
A.~G. Dimakis, P.~Godfrey, Y.~Wu, M.~Wainwright, and K.~Ramchandran.
\newblock Network coding for distributed storage systems.
\newblock {\em IEEE Transactions on Information Theory}, 56(9):4539--4551, Sept
  2010.

\bibitem{DRWS-survey}
A.~G. Dimakis, K.~Ramchandran, Y.~Wu, and C.~Suh.
\newblock A survey on network codes for distributed storage.
\newblock {\em Proceedings of the {IEEE}}, 99(3):476--489, 2011.

\bibitem{GHSY12}
P.~Gopalan, C.~Huang, H.~Simitci, and S.~Yekhanin.
\newblock On the locality of codeword symbols.
\newblock {\em {IEEE} Trans. Information Theory}, 58(11):6925--6934, 2012.

\bibitem{GoparajuFV16}
S.~Goparaju, A.~Fazeli, and A.~Vardy.
\newblock Minimum storage regenerating codes for all parameters.
\newblock {\em IEEE Transactions on Information Theory}, 63(10):6318--6328,
  2017.

\bibitem{GTC14}
S.~Goparaju, I.~Tamo, and R.~Calderbank.
\newblock An improved sub-packetization bound for minimum storage regenerating
  codes.
\newblock {\em IEEE Transactions on Information Theory}, 60(5):2770--2779, May
  2014.

\bibitem{GW16}
V.~Guruswami and M.~Wootters.
\newblock Repairing reed-solomon codes.
\newblock {\em IEEE transactions on Information Theory}, 63(9):5684--5698,
  2017.

\bibitem{HPX18}
K.~Huang, U.~Parampalli, and M.~Xian.
\newblock Improved upper bounds on systematic-length for linear minimum storage
  regenerating codes.
\newblock {\em IEEE Transactions on Information Theory}, 65(2):975--984, 2018.

\bibitem{HLKB15}
W.~Huang, M.~Langberg, J.~Kliewer, and J.~Bruck.
\newblock Communication efficient secret sharing.
\newblock {\em IEEE Transactions on Information Theory}, 62(12):7195--7206, Dec
  2016.

\bibitem{PD14}
D.~S. Papailiopoulos and A.~G. Dimakis.
\newblock Locally repairable codes.
\newblock {\em {IEEE} Trans. Information Theory}, 60(10):5843--5855, 2014.

\bibitem{PapDimCad13}
D.~S. Papailiopoulos, A.~G. Dimakis, and V.~Cadambe.
\newblock Repair optimal erasure codes through hadamard designs.
\newblock {\em IEEE Transactions on Information Theory}, 59(5):3021--3037, May
  2013.

\bibitem{RSK11}
K.~V. Rashmi, N.~B. Shah, and P.~V. Kumar.
\newblock Optimal exact-regenerating codes for distributed storage at the {MSR}
  and {MBR} points via a product-matrix construction.
\newblock {\em {IEEE} Trans. Information Theory}, 57(8):5227--5239, 2011.

\bibitem{RTGE18}
A.~S. Rawat, I.~Tamo, V.~Guruswami, and K.~Efremenko.
\newblock {MDS} code constructions with small sub-packetization and
  near-optimal repair bandwidth.
\newblock {\em {IEEE} Trans. Information Theory}, 64(10):6506--6525, 2018.

\bibitem{SAK15}
B.~Sasidharan, G.~K. Agarwal, and P.~V. Kumar.
\newblock A high-rate {MSR} code with polynomial sub-packetization level.
\newblock In {\em Proc. of 2015 IEEE International Symposium on Information
  Theory (ISIT)}, pages 2051--2055, June 2015.

\bibitem{SVK16}
B.~Sasidharan, M.~Vajha, and P.~V. Kumar.
\newblock An explicit, coupled-layer construction of a high-rate msr code with
  low sub-packetization level, small field size and d<(n- 1).
\newblock In {\em 2017 IEEE International Symposium on Information Theory
  (ISIT)}, pages 2048--2052. IEEE, 2017.

\bibitem{TB14}
I.~Tamo and A.~Barg.
\newblock A family of optimal locally recoverable codes.
\newblock {\em {IEEE} Trans. Information Theory}, 60(8):4661--4676, 2014.

\bibitem{zigzag13}
I.~Tamo, Z.~Wang, and J.~Bruck.
\newblock Zigzag codes: {MDS} array codes with optimal rebuilding.
\newblock {\em IEEE Transactions on Information Theory}, 59(3):1597--1616,
  March 2013.

\bibitem{TWB14}
I.~Tamo, Z.~Wang, and J.~Bruck.
\newblock Access versus bandwidth in codes for storage.
\newblock {\em IEEE Transactions on Information Theory}, 60(4):2028--2037,
  April 2014.

\bibitem{TYB-focs17}
I.~Tamo, M.~Ye, and A.~Barg.
\newblock Optimal repair of {R}eed-{S}olomon codes: Achieving the cut-set
  bound.
\newblock In {\em 58th {IEEE} Annual Symposium on Foundations of Computer
  Science}, pages 216--227, 2017.

\bibitem{WTB12}
Z.~Wang, I.~Tamo, and J.~Bruck.
\newblock Long {MDS} codes for optimal repair bandwidth.
\newblock In {\em Proc. of 2012 IEEE International Symposium on Information
  Theory (ISIT)}, pages 1182--1186, July 2012.

\bibitem{YeB17a}
M.~Ye and A.~Barg.
\newblock Explicit constructions of high-rate {MDS} array codes with optimal
  repair bandwidth.
\newblock {\em {IEEE} Trans. Information Theory}, 63(4):2001--2014, 2017.

\bibitem{YeB17b}
M.~Ye and A.~Barg.
\newblock Explicit constructions of optimal-access {MDS} codes with nearly
  optimal sub-packetization.
\newblock {\em {IEEE} Trans. Information Theory}, 63(10):6307--6317, 2017.

\end{thebibliography}

\end{document}